\newtheorem{theorem}{Theorem}[section]
\newtheorem{lemma}[theorem]{Lemma}
\newtheorem{corollary}[theorem]{Corollary}
\numberwithin{equation}{section}
\newcommand {\NN}{{\mathbb N}}
\newcommand {\CC}{{\mathbb C}}
\newcommand {\RR}{{\mathbb R}}
\newcommand {\ZZ}{{\mathbb Z}}
\begin{document}
\baselineskip=15.5pt

\title[No-go Theorem for Nonabelionic Statistics in GLSMs]{A no-go theorem for nonabelionic statistics in
gauged linear sigma-models}

\author[I. Biswas]{Indranil Biswas}

\address{School of Mathematics, Tata Institute of Fundamental
Research, Homi Bhabha Road, Bombay 400005, India}

\email{indranil@math.tifr.res.in}

\author[N. M. Rom\~ao]{Nuno M. Rom\~ao}

\address{Mathematisches Institut,
Georg-August-Universit\"at G\"ottingen,
Bunsenstra\ss e 3--5, 37073 G\"ottingen,
Germany}

\email{nromao@uni-math.gwdg.de}

\subjclass[2010]{14D21, 14H81, 58Z05}

\keywords{Gauged linear sigma-model, vortex equation, nonabelions, Hecke transformation}

\date{}

\begin{abstract}
Gauged linear sigma-models at critical coupling on Riemann surfaces yield self-dual field theories, their classical 
vacua being described by the vortex equations. For local models with structure group ${\rm U}(r)$, we give a 
description of the vortex moduli spaces in terms of a fibration over symmetric products of the base surface 
$\Sigma$, which we assume to be compact. Then we show that all these fibrations induce isomorphisms of fundamental groups. 
A consequence is that all the moduli spaces of multivortices in this class of models have abelian fundamental 
groups. We give an interpretation of this fact as a no-go theorem for the realization of nonabelions through the 
ground states of a supersymmetric version (topological via an A-twist) of these gauged sigma-models. This analysis 
is based on a semi-classical approximation of the QFTs via supersymmetric quantum mechanics on their classical 
moduli spaces.
\end{abstract}

\maketitle
\tableofcontents

\section{Introduction}\label{intro}

Gauged sigma-models appear in a wide spectrum of physical contexts ranging from models of fundamental forces of nature in high-energy physics to effective field
theories describing order parameters of correlated electrons. In the case where the base $\Sigma$ and the target $X$ are chosen to be K\"ahler manifolds, there exist self-dual
versions of these models, and the solutions to the corresponding self-duality equations are called vortices. The
moduli spaces of vortices encode rich geometry and topology.
It has long been recognized that the study of these spaces should unlock much information about the corresponding field theories (either their Riemannian or Lorentzian
versions), which are objects of great interest --- for instance, GLSMs (gauged linear sigma models, corresponding to linear target actions) on surfaces are a basic 
ingredient in mirror symmetry~\cite{HV}. However, some basic questions on these field theories with relevance to model-building in physics remain to be answered. 

In this paper, we address one such question: is it possible to use familiar GLSMs to model exotic statistics in 
quantization --- i.e., couple the moduli dynamics with Aharonov--Bohm holonomies to implement anyonic statistics? 
Such issues have recently been gaining prominence because of the potential relevance of nonabelions~\cite{MooRea} in 
models for the fractional quantum Hall effect and the emergent theory of quantum computation, two areas where both 
sigma-models and gauge theories have been used extensively.

One way to implement statistical phases into a quantum-mechanical system is to construct local systems over their 
configuration spaces, and demand that wavefunctions (or waveforms, in supersymmetric extensions of the models) 
couple to the underlying flat connection. In the situation we want to explore, the role of configuration space is 
taken by a moduli space of static stable solitons described by the vortex equations~\cite{JafTau,Bra,GaDEP}. In a 
semi-classical approach to the quantization, one can lift waveforms on the moduli spaces valued in local systems to 
ordinary waveforms in supersymmetric quantum mechanics on appropriate covers of the configuration 
spaces~\cite{BokRomP, BokRomWeg}. This procedure is convenient when one needs to deal with families of local 
systems, which are naturally parametrized by a continuum --- namely, for each homotopy class of matter field 
configurations, parametrized by a representation variety of the fundamental group of a given moduli space. Thus the fundamental 
groups will severely constrain which type of anyonic statistics can be implemented by this quantization scheme. In 
particular, a necessary condition for the sigma-models to give rise to nonabelionic particles is that at least one 
fundamental group corresponding to multivortices (and on a suitable surface $\Sigma$) is nonabelian.

Our final goal in this paper (see Corollary~\ref{nogo}) is to establish that nonabelian statistics is ruled out for an interesting class of GLSMs defined on all compact Riemann surfaces. The models we will examine are the local ${\rm U}(r)$-GLSMs studied by Baptista in~\cite{BatNAV}, for which the moduli spaces are rigorously understood for a certain
range of the parameters --- more precisely, by a degree $d$, the total area of $\Sigma$ and also the vacuum expectation value in the potential term; this will be reviewed
briefly in Section~\ref{review}. One can realize these moduli spaces~\cite{BisRom} as Quot schemes fibred over symmetric products
of the base Riemann surface $\Sigma$. The fibres of this map in the nonabelian case $r\,>\,1$, which parametrize vortex ``internal structures''~\cite{BatNAV}, are described in
Section~\ref{neqr}. We show in Section~\ref{fundgroups} that this fibration map induces isomorphisms of fundamental groups for all ranks $r$, all degrees $d$ and all Riemann surfaces $\Sigma$. In Section~\ref{snogo}, we draw an implication of this result for the semi-classical quantization scheme that we have referred to above, and comment on contrasting results for other gauged sigma-models.

\section{Gauged linear sigma-models and vortex moduli spaces} \label{review}

We will be concerned with $(1+2)$-dimensional sigma-models for fields defined on a connected, compact and oriented
Riemannian surface $(\Sigma,g_\Sigma)$; the
target $X\,=\,{\rm Mat}_{r\times n} \CC$ (where $r,n \in \NN$) of the sigma-model
is the vector space of complex $r\times n$ matrices on which the structure group ${\rm U}(r)$ acts by multiplication on
the left. In this section we start by fixing our conventions, then recall how moduli spaces of vortices play a role in the description of the classical solutions in these models, and how
they can be understood in terms of algebraic geometry.

\subsection{GLSMs and the vortex equations} \label{sec:adiabatic}

For the moment, we do not impose any restriction on the integers $r$ and $n$. Let us fix an ${\rm U}(r)$-invariant inner
product on the Lie algebra $\mathfrak{u}(r)$, which is tantamount to an ${\rm U}(r)$-equivariant isomorphism
$\kappa: \mathfrak{u}(r)^* \longrightarrow \mathfrak{u}(r)$. We consider the canonical K\"ahler structure
$$\omega_X\,=\,\frac{\sqrt{-1}}{2}\sum_{j,k=1}^{r,n}{\rm d}w_{j,k}\wedge{\rm d}\bar w_{j,k}$$ on $X\,\cong\,
\CC^{rn}$, which is preserved by the left-multiplication by ${\rm U}(r)$ matrices. One readily checks that the
moment maps for this action are of the form
$$
\mu_\tau(\mathbf{w})\,=\,-\frac{\sqrt{-1}}{2} \left(\mathbf{w} {\mathbf{w}}^{\dag} - \tau \mathbf{1}_r \right)
$$
where $\mathbf{1}_r$ is the identity matrix, and $\tau \in \RR$ a constant; we shall write $\mu_\tau^\kappa:= \kappa \circ \mu_\tau$.
Note that the Riemannian structure $g_\Sigma$ together with the orientation also determine a K\"ahler
structure $(\Sigma\, ,\omega_\Sigma\, ,j_\Sigma)$ on $\Sigma$.

Let $P\,\longrightarrow\, \Sigma$ be a principal ${\rm U}(r)$-bundle, and let $${\rm pr}_2\,:\,\RR\times \Sigma \,
\longrightarrow\, \Sigma$$ be the projection onto the second factor.
The GLSMs of our interest describe dynamics parametrized by some time interval $I\subset \RR$.
The solutions determined by these data are stationary points $(\widetilde A,u)$ of the Yang--Mills--Higgs functional
\begin{equation}\label{GLSM}
\mathcal{S}(\widetilde A,u)\,:=\,
-\frac{1}{2e^2}\|F_{\widetilde A}\|^2 + \frac{1}{2}\|{\rm d}^{\widetilde A} u \|^2 - \frac{e^2 \xi}{2}
\|\mu_\tau^\kappa \circ u \|^2 \, ,
\end{equation}
where $e$ and $\xi$ are real parameters. We use $\| \cdot \|$ generically to denote $L^2$-norms for differential forms on $I \times \Sigma$ with respect to the Lorentzian
metric ${\rm d}t^2-g_\Sigma$ on $\RR \times \Sigma$, the inner product on $\mathfrak{u}(r)$ associated to $\kappa$
as well as the standard (flat) metric on $X$ determined by $\omega_X$.
The variables in (\ref{GLSM}) are a ${\rm U}(r)$-connection
$$\widetilde A\,=\,A_t{\rm d}t + A(t)$$ on $({\rm pr}_2^*P)|_{I\times\Sigma}$, and a path $u\,:\,I\, \longrightarrow\, C^\infty(P,X)^{{\rm U}(r)}$ of smooth
${\rm U}(r)$-equivariant maps satisfying appropriate boundary conditions. We may as well interpret $u$ as a section of the associated rank $rn$ complex vector bundle $({\rm pr}_2^*P)\times_{{\rm U}(r)}X$, and $\widetilde A$ as a connection on this vector bundle. As usual, $F_{\widetilde A}$ and ${\rm d}^{\widetilde A}$ denote the curvature and covariant derivative determined by the connection $\widetilde A$.

One should regard $A_t$ as a Lagrange multiplier in this problem, since its time derivative does not feature in the integrand of (\ref{GLSM}). The corresponding Euler--Lagrange equation 
is a constraint that enforces the paths $t\,\longmapsto\, (A(t),u(t))$ to be instantaneously $L^2$-orthogonal to the
orbits of the gauge group 
$\mathcal{G}\,:=\, {\rm Aut}_\Sigma(P)$, acting as $$(A(t), u(t)) \,\longmapsto\,
({\rm Ad}_{g(t)}A(t)-{\sqrt{-1}}g(t){\rm d}g(t), g(t)^{-1}u(t))\, .$$ Ultimately, one is only interested in solutions of the Euler--Lagrange equations
for the dynamical fields $A(t)$ and $u(t)$ (i.e., the equations of motion of the GLSM) up to the action of the group of paths in $\mathcal G$, under which (\ref{GLSM}) is manifestly invariant. 

For these sigma-models, the equations of motion are second-order PDEs in three dimensions, and difficult to study. But there is a well-known procedure to approximate slow-moving solutions 
at the self-dual point $\xi\,=\,1$ in the so-called BPS sector, where $${\rm deg}\, P\,=\, [c_1(A(t))]\,=\,d\, [\Sigma] 
\,\in\, H^2(\Sigma;\, \ZZ)\,\cong\, \ZZ$$ is a positive multiple ($d\,\in\, \NN$) of the fundamental class, by
geodesics in a {\em moduli space} of vortices in two dimensions.
The basic idea~\cite{ManSut} is to approximate solutions by paths of stable static solutions of the model, up to the action of $\mathcal G$. So each point on such a path is itself a 
$\mathcal G$-orbit of solutions of the sigma-model --- more precisely, it can be represented by a constant path $(A,u)$ of minimal potential energy. The potential
energy can be read off from (\ref{GLSM}) to be a sum of $L^2$-norms of the various forms pulled back to Cauchy slices $\{t \}\times \Sigma$, and we can write it as
\begin{eqnarray*}
V(A,u)&=&\frac{1}{2}\int_\Sigma \left( \frac{1}{e^2}|F_{A} |^2 + |{\rm d}^{A}u|^2 + e^2 \xi |\mu_\tau^\kappa \circ u|^2\right)\\
&=&\pi \tau d + \frac{(\xi-1)e^2}{2}\int_\Sigma |\mu_\tau^\kappa \circ u|^2 + \int_\Sigma \left( |{\bar\partial}^{A}_{j_\Sigma}u|^2 + \frac{1}{2} \left|\frac{1}{e}F_{A} + e (\mu_\tau^\kappa \circ u)\omega_\Sigma\right|^2\right) ,
\end{eqnarray*}
where $|\cdot |$ denotes pointwise norms with respect to $g_\Sigma$ and the same target data as before, and
 $\bar\partial^{A}_{j_\Sigma}$ is the usual holomorphic structure on the vector bundle $$P\times_{{\rm U}(r)}X
\,\longrightarrow\, \Sigma$$ constructed from the connection $A$ in $P$
and the complex structure $j_\Sigma$. This rearrangement of the squares is known as the ``Bogomol'ny\u\i\ trick'', and it
makes it clear that, for $\xi\,=\,1$, the minima of $V$ (for each topological class with $d\,>\,0$ fixed) are described by the first-order PDEs
\begin{equation} \label{vortex}
\bar \partial^A_{j_\Sigma} u=0,\qquad F_A +e^2 (\mu_\tau^\kappa \circ u) \omega_\Sigma =0
\end{equation}
called the {\em vortex equations}~\cite{JafTau,Bra,GaDEP}. One can run a similar argument for $d<0$ using
the ``anti-vortex equations'' and employing a variant of the
Bogomol'ny\u\i\ trick.

Given $n,r,d\,\in\, \NN$ as above, we define the moduli space of vortices valued in $r\times n$ matrices at degree
$d$ to be the quotient
\begin{equation}\label{mo}
\mathcal{M}_\Sigma(n,r,d)\,:=\, \{ (A,u): \bar \partial^A_{j_\Sigma} u\,=\,0\,=\,
F_A +e^2 (\mu_\tau^\kappa \circ u) \omega_\Sigma, [c_1(A)]\,=\,d[\Sigma]
 \}/\mathcal{G}\, .
\end{equation}
This space is known to be smooth at least for a range of the parameters (as detailed in Section~\ref{sec:npairs}), and
then it acquires a K\"ahler structure; this K\"ahler structure is often
denoted $\omega_{L^2}$. The underlying metric $g_{L^2}$ (see e.g.~\cite{BatL2M}) can also be thought of as being induced
by the kinetic energy part of the functional (\ref{GLSM}). It is
the geodesic flow of $g_{L^2}$ that approximates the field dynamics at low energies, for initial conditions that solves
the linearization of the equations (\ref{vortex}) --- see~\cite{ManSut}, and~\cite{Stu} for an analysis of the case
$\Sigma\,=\,\CC$, $r\,=\,n\,=\,1, d\,=\,2$.
One can summarize this situation by saying that at low energies the $(1+2)$-dimensional sigma-model is well described by a one-dimensional sigma-model whose target is the
Riemannian manifold $(\mathcal{M}_\Sigma(n,r,d),g_{L^2})$.

\subsection{Vortex moduli, $n$-pairs and Quot schemes} \label{sec:npairs}

Recall that the operator ${\bar \partial}_{j_\Sigma}^A$ acting on sections of any vector bundle associated to a
the principal ${\rm U}(r)$-bundle $P
\,\longrightarrow\, \Sigma$ such as $E\,:=\, P\times_{{\rm U}(r)} \CC^r$
or $P\times _{{\rm U}(r)} X \cong E^{\oplus n}$ endows it with the structure of holomorphic vector
bundle. Thus we can regard a solution $u\,=\,s$ of the first equation in (\ref{vortex})
as a so-called {\em $n$-pair} $(E\, ,s)$ with $s \,\in\, H^0(\Sigma,\, E^{\oplus n})\,\cong\, H^0(\Sigma,\,
E)^{\oplus n}$, see~\cite{BDW}. One groups any two of such objects $(E,s)$ and $(E',s')$ in an isomorphism equivalence class whenever there is
an isomorphism of holomorphic vector bundles $\psi\,:\, E\,\longrightarrow\, E'$ with $\psi^* s'\,=\,s$. Alternatively, one says that such equivalence classes correspond to orbits of the action
of the complexification $\mathcal{G}^\CC$ of the gauge group we introduced above, which preserves the first equation in~(\ref{vortex}) but not the second one.

There is a way to relate equivalence classes of $n$-pairs with points in the moduli space~(\ref{mo}) we introduced
above, generalizing results in~\cite{Br2} for $n\,=\,1$. Let ${\rm Vol}(\Sigma)\,:=\,\int_\Sigma \omega_\Sigma$ denote
the total area of the surface. One shows (\cite{BDW,BDGW}) that whenever $(E,s)$ is {\em $\frac{e^2 \tau}{4\pi}{\rm Vol}(\Sigma)$-stable} in the sense that~\cite{BisRom}
\begin{itemize}
\item
$\displaystyle \frac{{\rm deg}\, E'}{{\rm rk}\, E'} \,<\, \frac{e^2 \tau}{4\pi} {\rm Vol}(\Sigma) $ for all
holomorphic subbundles $E'\subseteq E$, and
\item
$\displaystyle \frac{{\rm deg}\,(E/E_s)}{{\rm rk}(E/E_s)} \,>\, \frac{e^2 \tau}{4\pi} {\rm Vol}(\Sigma) $ for all
holomorphic subbundles $E_s \subsetneq E$ containing all the component sections of $s$,
\end{itemize}
there is exactly one $\mathcal G$-orbit of solutions $(A,u)$ to the second equation (\ref{vortex}) inside a
$\mathcal{G}^\CC$-orbit of a solution $(A,u)$ to the first equation, and this produces a bijection between the two
quotients that preserves their natural complex structures (this is an example of
the so-called Hitchin--Kobayashi correspondence). One can check that whenever
$n\,\geq\, r$ and 
\begin{equation}\label{stable}
e^2 \tau\, {\rm Vol}(\Sigma) \,>\, 4 \pi \, {{\deg} (E)}
\end{equation} 
are assumed (as we will do from now on), then both conditions itemized above are automatically met if $s$ has maximal
rank generically on $\Sigma$. The main advantage is that one can then describe $\mathcal{M}_\Sigma(n,r,d)$ purely in terms of algebraic geometry. We point out that
there are other natural stability conditions on $n$-pairs; the reader is referred to~\cite{Tha} for a discussion of this, as well as for an illustration of 
how these moduli spaces may undergo rather dramatic changes for other values of the stability parameter.

In~\cite{BisRom}, setting $n\,\geq\, r$ we considered for each $n$-pair $(E,s)$ as above a homomorphism of
holomorphic vector bundles 
\begin{equation}\label{e2}
f_s \,:\, {\mathcal O}^{\oplus n}_\Sigma \,\longrightarrow\, E
\end{equation}
given by $(x ; c_1 ,\cdots ,c_n)
\,\longmapsto \,\sum_{i=1}^n c_i\cdot s_i(x)$, where $x \,\in\,
\Sigma$ and $c_i \,\in\, \mathbb C$. The image ${\rm im}(f_s)$ is a coherent
analytic sheaf which is torsion-free because it is contained in the
torsion-free sheaf $E$, and ${\rm im}(f_s)$ generically generates $E$.
Considering the dual homomorphism to (\ref{e2}), we obtain a short exact sequence
\begin{equation}\label{ses}
0 \longrightarrow 
E^* \stackrel{f^*_s}{\longrightarrow} (\mathcal{O}_\Sigma^{\oplus n})^*
\,=\,\mathcal{O}_\Sigma^{\oplus n} \longrightarrow \mathcal{Q} \longrightarrow 0\, ,
\end{equation}
where $\mathcal{Q}$ is of rank $n-r$; so $\mathcal{Q}$ a torsion sheaf if $n\,=\, r$.
The support of the torsion part of $\mathcal{Q}$ consists of points points
where ${\rm im}(f_s)$ fails to generate $E$.
Given an auxiliary ample line bundle $\mathcal{L}\,\longrightarrow\, \Sigma$ of sufficiently large degree so that
$$H^1(\Sigma,\,E^*\otimes \mathcal{L}^{\otimes \delta})\,=\,0$$ for all
integers $\delta \,\ge \,\delta_E$ (where $\delta_E \,\in\, \NN$ depends only on $n$, $r$ and
$\deg E$), and after tensoring (\ref{ses}) with $\mathcal{L}^{\otimes \delta}$, one extracts a short exact sequence of vector spaces
\begin{equation}\label{mapQ}
0\longrightarrow H^0(\Sigma,\,E^*\otimes\mathcal{L}^{\otimes \delta}) \longrightarrow
H^0(\Sigma,\,(\mathcal{L}^{\otimes \delta})^{\oplus n}) \stackrel{Q}{\longrightarrow} H^0(\Sigma,\,\mathcal{Q} \otimes \mathcal{L}^{\otimes \delta}) \longrightarrow 0
\end{equation}
from the long exact sequence of cohomologies associated to \eqref{ses} tensored with $\mathcal{L}^{\otimes \delta}$.
In~\cite[Lemma~3.2]{BisRom}, it was shown that the homomorphism $f_s$ (and thus the $n$-pair $(E,s)$ up to
isomorphism) can be reconstructed from
the quotient $Q$ in~(\ref{mapQ}). This construction realizes (the algebraic-geometric version of) the moduli space of vortices $\mathcal{M}_\Sigma(n,r,d)$ as a Quot scheme~\cite{Gr}.
One can take advantage of this viewpoint to study properties of the moduli space under the
assumption (\ref{stable}) --- for example, show that it is smooth, projective, and realize the K\"ahler class $[\omega_{L^2}]$ geometrically~\cite{BisRom}.

\section{Internal structures of nonabelian local vortices} \label{neqr}

In this section we shall assume that $r\, =\, n$, which is usually referred to as the case of {\em local} vortices, in contrast with the nonlocal case $r<n$. We will give a description
of our Quot scheme in the nonabelian situation $r\,>\, 1$, by means of Hecke 
modifications~\cite{FB, HL} on holomorphic vector bundles over $\Sigma$. 
So from now on we shall set
\begin{equation}\label{mosimp}
{\mathcal M}_\Sigma\,:=\, {\mathcal M}_\Sigma(n,n,d)
\end{equation}
(see \eqref{mo}). We know that the objects parametrized by this space can also be described as isomorphism classes of $n$--pairs $(E,s)$ if the
condition \eqref{stable} holds; the corresponding $n$ sections generically generate
the vector bundle $E\,\longrightarrow \,\Sigma$.

Take any $(E\, ,s)\, \in\, {\mathcal M}_\Sigma$. Consider the homomorphism
$f_s$ in \eqref{e2}. Since the sections of $E$ in $s$ generate $E$
generically, we know that the quotient 
${\rm coker}\, f_s = E/f_s({\mathcal O}^{\oplus n}_\Sigma)$
is a torsion sheaf supported on finitely many points, and we have
$$
\dim H^0(\Sigma,\, E/f_s({\mathcal O}^{\oplus n}_\Sigma))\,=\, d\, .
$$
Note that ${\mathcal M}_\Sigma(1,1,d)$ is identified with the $d$-fold symmetric
product ${\rm Sym}^d(\Sigma)$ by sending any $(E\, ,s) \, \in\, {\mathcal M}_\Sigma(1,1,d)$
to the scheme-theoretic support of the quotient $E/f_s({\mathcal O}^{\oplus n}_\Sigma)$.
For general $n\, \geq\, 1$, consider the $n$-th exterior product
$$
\bigwedge\nolimits^n f_s\, :\, \bigwedge\nolimits^n{\mathcal O}^{\oplus n}_\Sigma
\,=\, {\mathcal O}_\Sigma \,\longrightarrow\, \bigwedge\nolimits^n E
$$
of the homomorphism in \eqref{e2}. Let
\begin{equation}\label{Phi}
\Phi\, :\, {\mathcal M}_\Sigma\,=\, {\mathcal M}_\Sigma(n,n,d)
\, \longrightarrow\, {\mathcal M}_\Sigma(1,1,d)\,=\, {\rm Sym}^d(\Sigma)
\end{equation}
be the map that sends any $(E\, ,s)$ to the pair $(\bigwedge\nolimits^nE\, ,
\bigwedge\nolimits^n f_s)$ constructed above from $(E\, ,s)$. To explain what this map
$\Phi$ does, let $m_x$ denote the dimension
of the stalk of $E/f_s({\mathcal O}^{\oplus n}_\Sigma)$ at each
point $x\, \in\, \Sigma$. Since $E/f_s({\mathcal O}^{\oplus n}_\Sigma)$ is a torsion sheaf,
we have $m_x\,=\, 0$ for all but finitely many $x$.
The map $\Phi$ sends $(E\, ,s)$ to $\sum_{x\in\Sigma} m_x\cdot x$.

The map $\Phi$ in \eqref{Phi} is clearly surjective. In what follows,
we shall describe step by step its fibers, which parametrize the vortex {\em internal structures} introduced in~\cite{BatNAV}. We shall obtain a description of the moduli space as
a stratification by the type of the partitions of $d$ associated to effective
divisors of degree $d$. This description will be fully algebraic-geometric, contrasting to the one in reference~\cite{BatNAV}, which depended on a choice of Hermitian inner product
on the fibres of $E$.

\subsection{The case of distinct points}\label{sec2d}

Let ${\mathbb P}^{n-1}$ be the projective space parametrizing all hyperplanes
in ${\mathbb C}^n$. Take $d$ distinct points
$$
x_1\, ,\cdots\, ,x_d \, \in\, \Sigma\, .
$$
Let $\underline{x}\, \in\, {\rm Sym}^d(\Sigma)$ be the point defined
by $\{x_1\, ,\cdots\, ,x_d\}$. We will show that the fiber of $\Phi$ over
$\underline{x}$ is the Cartesian product $({\mathbb P}^{n-1})^d$.
This is a description of the generic fiber of the map $\Phi$, and it 
coincides with the one in~\cite{BatNAV}.

Take any $(H_1\, ,\cdots\, ,H_d)\,\in\, ({\mathbb P}^{n-1})^d$. So each
$H_i$ is a hyperplane in ${\mathbb C}^n$. The fiber of the trivial
vector bundle ${\mathcal O}^{\oplus n}_\Sigma$
over $x_i$ is identified with ${\mathbb C}^n$. Thus the hyperplane
$H_i\, \subset\, {\mathbb C}^n$ defines a hyperplane
$\widetilde{H}_i$ in the fiber of ${\mathcal O}^{\oplus n}_\Sigma$
over the point $x_i$. Let
\begin{equation}\label{i1}
\widetilde{q}\,:\, {\mathcal O}^{\oplus n}_\Sigma \,\longrightarrow
\, \bigoplus_{i=1}^d ({\mathcal O}^{\oplus n}_\Sigma)_{x_i}/
\widetilde{H}_i
\end{equation}
be the quotient map. The kernel of $\widetilde{q}$ will be denoted
by $\widetilde{\mathcal K}$, and we have the following short exact sequence
of sheaves on $\Sigma$:
$$
0\,\longrightarrow\, \widetilde{\mathcal K}\, \stackrel{h}{\longrightarrow}\,
{\mathcal O}^{\oplus n}_\Sigma\, \stackrel{\widetilde{q}}{\longrightarrow}
\, \bigoplus_{i=1}^d ({\mathcal O}^{\oplus n}_\Sigma)_{x_i}/
\widetilde{H}_i\, \longrightarrow\, 0\, .
$$
Now consider the dual of the homomorphism $h$ above,
$$
h^*\, :\, ({\mathcal O}^{\oplus n}_\Sigma)^*\,=\,{\mathcal O}^{\oplus n}_\Sigma
 \, \longrightarrow\,\widetilde{\mathcal K}^*\, .
$$
It is easy to see that the pair $(\widetilde{\mathcal K}^*\, ,h^*)$
defines a point in the fiber of $\Phi$ (see \eqref{Phi})
over the point $\underline{x}$, and that each point in the fiber can be obtained 
by choosing the hyperplanes $H_i$ suitably. This construction identifies
the fiber of $\Phi$ over
$\underline{x}$ with the Cartesian product $({\mathbb P}^{n-1})^d$.

Employing the usual terminology, we can say that we have constructed the bundle 
$E\,=\,\widetilde{\mathcal K}$
of an $n$-pair by performing $d$ elementary Hecke modifications (one at each $x_i$)
on the trivial bundle of rank $n$ over $\Sigma$, and the inclusion $h$ yields the 
morphism $h^*\,=\, f_s$ in (\ref{e2}) which is equivalent to a 
holomorphic section $s\,\in\, H^0 (\Sigma\, ,E^{\oplus n})$ that generate $E$
over a nonempty Zariski open subset of $\Sigma$.

\subsection{Case of multiplicity two}

Now take $d-1$ distinct points
$$
x_1\, ,\cdots\, ,x_{d-1}\, \in\, \Sigma\, .
$$
Let $\underline{x}\, \in\, {\rm Sym}^d(\Sigma)$ be the point defined
by $2x_1+\sum_{j=2}^{d-1} x_j$. We will describe the fiber
of $\Phi$ over $\underline{x}$.

Let $H_1$ be a hyperplane in ${\mathbb C}^n$. Let
$$
q_1\, :\, {\mathcal O}^{\oplus n}_\Sigma \,\longrightarrow
\, ({\mathcal O}^{\oplus n}_\Sigma)_{x_1}/\widetilde{H}_1
$$
be the quotient map, where, just as in \eqref{i1}, $\widetilde{H}_1$ 
is the hyperplane in the fiber of ${\mathcal O}^{\oplus n}_\Sigma$
over $x_1$ given by $H_1$. Let ${\mathcal K}(H_1)$ denote the
kernel of $q_1$. So we have a short exact sequence of sheaves
on $\Sigma$
\begin{equation}\label{i2}
0\,\longrightarrow\, {\mathcal K}(H_1)\,\stackrel{h'}{\longrightarrow}\,
{\mathcal O}^{\oplus n}_\Sigma\,\longrightarrow\,({\mathcal O}^{\oplus
n}_\Sigma)_{x_1}/\widetilde{H}_1\,\longrightarrow\, 0\, .
\end{equation}

Consider the space ${\mathcal S}_2$ of all objects of the form
$$
(H_1\, , H_2\, , \cdots\, , H_{d-1}\, ; H^1)\, ,
$$
where $H_i$, $1\,\leq\, i\, \leq\, d-1$, is a hyperplane
in ${\mathbb C}^n$, and $H^1$ is a hyperplane in the fiber over $x_1$ of the above
vector bundle ${\mathcal K}(H_1)$. There is a natural surjective map from this 
space ${\mathcal S}_2$ to the fiber of $\Phi$ over the point 
$\underline{x}$ of ${\rm Sym}^d(\Sigma)$. To construct this map, first 
note that for any point $x\, \in\, 
\Sigma$ different from $x_1$, the fibers of ${\mathcal K}(H_1)$ and
${\mathcal O}^{\oplus n}_\Sigma$ over $x$ are identified using the homomorphism
$h'$ in \eqref{i2}. Hence for any $2\,\leq\, j\, \leq\, d-1$, the hyperplane $H_j$ gives
a hyperplane in the fiber of ${\mathcal K}(H_1)$ over the point
$x_j$; this hyperplane in the fiber of ${\mathcal K}(H_1)$ will be denoted
by $\widetilde{H}_j$.
Let $\mathcal K$ be the holomorphic vector bundle over $\Sigma$ 
that fits in the following short exact sequence of sheaves:
\begin{equation}\label{i3}
0\,\longrightarrow\, {\mathcal K}\, \stackrel{h}{\longrightarrow}\,
{\mathcal K}(H_1) \, \longrightarrow\, ({\mathcal K}(H_1)_{x_1}/H^1)
\oplus \bigoplus_{j=2}^{d-1} {\mathcal K}(H_1)_{x_j}/\widetilde{H}_j
\,\longrightarrow\, 0\, .
\end{equation}
Consider the composition
$$
h'\circ h\, :\, {\mathcal K}\,\longrightarrow\,{\mathcal O}^{\oplus n}_\Sigma\, ,
$$
where $h'$ and $h$ are constructed in \eqref{i2} and \eqref{i3} respectively. Let
$$
(h'\circ h)^*\, :\, ({\mathcal O}^{\oplus n}_\Sigma)^*\,=\,{\mathcal O}^{\oplus n}_\Sigma
\,\longrightarrow\, {\mathcal K}^*
$$
be its dual. The pair $({\mathcal K}^*\, ,(h'\circ h)^*)$ defines an element of the moduli
space ${\mathcal M}_\Sigma$ that lies over $\underline{x}$ for
the surjection $\Phi$. Moreover, each element in the fiber over $\underline{x}$
arises in this way for some element of ${\mathcal S}_2$.

Sending any $(H_1\, , H_2\, , \cdots\, , H_{d-1}\, ; H^1)\,\in\,
{\mathcal S}_2$ to $(H_1\, , H_2\, , \cdots\, , H_{d-1})\,\in\,({\mathbb P}^{n-1})^{d-1}$,
we see that ${\mathcal S}_2$ is a projective bundle over $({\mathbb P}^{n-1})^{d-1}$ of
relative dimension $n-1$. Therefore, we have the following lemma:

\begin{lemma}\label{lem-f2}
The fiber $\Phi^{-1}(\underline{x})$ admits a natural isomorphism
with ${\mathcal S}_2$. The variety ${\mathcal S}_2$ is
a projective bundle over $({\mathbb P}^{n-1})^{d-1}$ of relative dimension $n-1$.
\end{lemma}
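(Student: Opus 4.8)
The plan is to make explicit the two natural maps already sketched in the text and verify they are mutually inverse, and then to identify the bundle structure on $\mathcal{S}_2$. First I would unwind the construction ${\mathcal S}_2 \to \Phi^{-1}(\underline{x})$ given above: an element $(H_1,\dots,H_{d-1};H^1)$ determines the two-step iterated Hecke modification $\mathcal{K} \hookrightarrow \mathcal{K}(H_1) \hookrightarrow \mathcal{O}^{\oplus n}_\Sigma$ from \eqref{i2} and \eqref{i3}, hence the $n$-pair $(\mathcal{K}^*,(h'\circ h)^*)$, and I would check that $\bigwedge^n(h'\circ h)^*$ vanishes to order exactly $2$ at $x_1$ and order exactly $1$ at each $x_j$, so that this $n$-pair indeed lies over $\underline{x}=2x_1+\sum_{j\ge 2}x_j$; this is a local computation at each of the finitely many points, reducing to the rank-one statement ${\mathcal M}_\Sigma(1,1,d)\cong{\rm Sym}^d(\Sigma)$ recalled before \eqref{Phi} applied to $\bigwedge^n$. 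The key point is that the length of ${\rm coker}(h'\circ h)$ at $x_1$ is the sum of the lengths contributed by the two successive corank-one modifications there, namely $1+1=2$.

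Next I would construct the inverse. Given $(E,s)\in\Phi^{-1}(\underline{x})$, the cokernel of $f_s$ is a torsion sheaf of length $d$ supported on $\{x_1,\dots,x_{d-1}\}$ with stalk-lengths $m_{x_1}=2$ and $m_{x_j}=1$ for $j\ge 2$. At each $x_j$ ($j\ge 2$) the length-one torsion quotient ${\mathcal O}^{\oplus n}_{\Sigma}\to (E/f_s{\mathcal O}^{\oplus n}_\Sigma)_{x_j}$ is a corank-one quotient of the fibre, i.e.\ it recovers a hyperplane $H_j\subset\mathbb{C}^n$. At $x_1$, where the stalk of ${\rm coker}\,f_s$ has length $2$, I would argue that because we are in the local case $r=n$ and $f_s$ is generically an isomorphism, the torsion-free sheaf ${\rm im}(f_s)$ sits in a unique maximal length-one step: factor $f_s$ through ${\mathcal O}^{\oplus n}_\Sigma\twoheadrightarrow{\mathcal O}^{\oplus n}_\Sigma/\widetilde{H}_1\cong{\mathcal O}^{\oplus n}_\Sigma/{\mathcal K}(H_1)$ where $H_1$ is the fibre-hyperplane determined by the quotient ${\mathcal O}^{\oplus n}_{\Sigma,x_1}\to({\rm coker}\,f_s)_{x_1}\otimes k(x_1)$; then $f_s$ descends to a map ${\mathcal K}(H_1)$-side whose cokernel at $x_1$ has length $1$, recording a hyperplane $H^1$ in the fibre of ${\mathcal K}(H_1)$ over $x_1$. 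This gives the tuple $(H_1,\dots,H_{d-1};H^1)\in{\mathcal S}_2$, and one checks the two assignments are inverse to each other by uniqueness of the torsion-quotient presentations (equivalently, by \cite[Lemma~3.2]{BisRom} the $n$-pair is reconstructed from these quotient data).

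For the second sentence of the lemma, I would observe that the forgetful map ${\mathcal S}_2\to({\mathbb P}^{n-1})^{d-1}$, $(H_1,\dots,H_{d-1};H^1)\mapsto(H_1,\dots,H_{d-1})$, has fibre over a fixed $(H_1,\dots,H_{d-1})$ equal to the space of hyperplanes in the fibre $({\mathcal K}(H_1))_{x_1}\cong\mathbb{C}^n$, which is a projective space of dimension $n-1$; since ${\mathcal K}(H_1)$ varies holomorphically (indeed algebraically) with $H_1$ — it is the kernel of the universal corank-one quotient along $x_1\times{\mathbb P}^{n-1}$ — the rank-$n$ locally free sheaf $V:=\big({\mathcal K}(H_1)\big)_{x_1}$ on $({\mathbb P}^{n-1})^{d-1}$ (pulled back from the first factor) is well defined, and ${\mathcal S}_2\cong{\mathbb P}(V)$ is the associated projective bundle of relative dimension $n-1$.

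The main obstacle I anticipate is the length-$2$ analysis at $x_1$: one must justify that every such $n$-pair factors through a \emph{unique} intermediate Hecke modification ${\mathcal K}(H_1)$, i.e.\ that the flag structure at $x_1$ is a nested pair of hyperplanes rather than something more degenerate, and that this assignment is algebraic in families. Equivalently one must show $\Phi^{-1}(\underline{x})$ is reduced and that the bijection ${\mathcal S}_2\to\Phi^{-1}(\underline{x})$ is an isomorphism of schemes, not merely of sets; I would handle this by exhibiting ${\mathcal S}_2$ as a fine moduli space for the relevant quotient data (via the Quot-scheme description from Section~\ref{sec:npairs} and \cite[Lemma~3.2]{BisRom}) so that the inverse map is automatically a morphism.
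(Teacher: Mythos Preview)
Your plan follows the paper's construction closely; indeed the paper offers no argument beyond the discussion preceding the lemma, which only establishes that the map $\mathcal{S}_2\to\Phi^{-1}(\underline{x})$ is surjective and that $\mathcal{S}_2$ is a $\mathbb{P}^{n-1}$-bundle over $(\mathbb{P}^{n-1})^{d-1}$. You go further by attempting to build an explicit inverse, and you correctly isolate the crux of the matter as the uniqueness of the intermediate Hecke step $\mathcal{K}(H_1)$.

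Unfortunately your recipe for recovering $H_1$ fails on a nonempty locus, and in fact no recipe can succeed: the surjection $\mathcal{S}_2\to\Phi^{-1}(\underline{x})$ is \emph{not} injective. You set $H_1$ equal to the kernel of $\mathbb{C}^n\to(\mathrm{coker}\,f_s)_{x_1}\otimes k(x_1)$, which is a hyperplane only when that fibre is one-dimensional, i.e.\ when the length-$2$ torsion module at $x_1$ is cyclic ($\cong\mathcal{O}_{\Sigma,x_1}/\mathfrak m_{x_1}^2$). But the fibre also contains points where this module is $k(x_1)^{\oplus 2}$. Concretely, take $n=2$ and a local parameter $t$ at $x_1$: the subsheaf $\mathcal{K}=t\,\mathcal{O}_\Sigma^{\oplus 2}$ has cokernel $k(x_1)^{\oplus 2}$ at $x_1$, and for \emph{every} hyperplane $H_1\subset\mathbb{C}^2$ one has $\mathcal{K}\subset\mathcal{K}(H_1)$ with $H^1$ then uniquely determined. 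Thus a whole $\mathbb{P}^1$ in $\mathcal{S}_2$ maps to this single point, so the ``inverse'' you propose cannot exist. In affine-Grassmannian language, $\Phi^{-1}(2x_1)$ is the (singular) Schubert variety $\overline{\mathrm{Gr}}{}_{(2,0,\dots,0)}$ for $GL_n$, and $\mathcal{S}_2$ is its Bott--Samelson resolution; the map is birational but contracts positive-dimensional loci over the deeper strata.

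This does not affect the paper's main result, since Theorem~\ref{lefg} uses only the description of the \emph{generic} fibre from Section~\ref{sec2d}. But you should be aware that the isomorphism asserted in Lemma~\ref{lem-f2} (and likewise Lemma~\ref{lem-f3}) cannot be established as stated; the honest statement is that $\mathcal{S}_2$ maps birationally onto $\Phi^{-1}(\underline{x})$ and is a $\mathbb{P}^{n-1}$-bundle over $(\mathbb{P}^{n-1})^{d-1}$. Your identification of the obstacle was on target; it simply turns out to be fatal rather than surmountable.
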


\subsection{Case of multiplicity $m>2$}

Let $m$ be an integer satisfying $2 < m \le d$, and 
fix $d-m+1$ distinct points
$x_1\, ,x_2\, ,\cdots\, , x_{d-m+1}$ of $\Sigma$. Let
$$
\underline{x}\, \in\, {\rm Sym}^d(\Sigma)
$$
be the point defined by $m\cdot x_1+\sum_{j=2}^{d-m+1}x_j$.

Let $H_1$ be a hyperplane in ${\mathbb C}^n$. Construct
${\mathcal K}(H_1)$ as in \eqref{i2}. Let
$$
H^1\, \subset\, {\mathcal K}(H_1)_{x_1}
$$
be a hyperplane in the fiber of the vector bundle
${\mathcal K}(H_1)$ over the point $x_1$. Let
${\mathcal K}(H^1)$ be the holomorphic vector bundle
over $\Sigma$ that fits in the following exact sequence
of sheaves
$$
0\,\longrightarrow\, {\mathcal K}(H^1)\,\longrightarrow\,
{\mathcal K}(H_1)\,\longrightarrow\,
{\mathcal K}(H_1)_{x_1}/H^1\,\longrightarrow\, 0\, .
$$
Now fix a hyperplane
$$
H^2\, \subset\, {\mathcal K}(H^1)_{x_1}
$$
in the fiber of ${\mathcal K}(H^1)$ over $x_1$. Let
${\mathcal K}(H^2)$ be the holomorphic vector bundle over
$\Sigma$ that fits in the following short exact sequence
of sheaves
$$
0\,\longrightarrow\, {\mathcal K}(H^2)\,\longrightarrow\,
{\mathcal K}(H^1)\,\longrightarrow\,
{\mathcal K}(H^1)_{x_1}/H^2\,\longrightarrow\, 0\, .
$$

Inductively, after $j$ steps performed as above, fix a hyperplane
$$
H^{j+1} \, \subset\, {\mathcal K}(H^j)_{x_1}
$$
and construct the vector bundle ${\mathcal K}(H^{j+1})$
that fits in the short exact sequence
\begin{equation}\label{i4}
0\,\longrightarrow\, {\mathcal K}(H^{j+1})\,\longrightarrow\,
{\mathcal K}(H^j)\,\longrightarrow\,
{\mathcal K}(H^j)_{x_1}/H^{j+1}\,\longrightarrow\, 0\, .
\end{equation}

Consider the space ${\mathcal S}_m$ of all elements of the form
$$
(H_1\, , H_2\, , \cdots\, , H_{d-m+1}\, ; 
H^1\, , H^2\, ,\cdots\, , H^{m-1})\, ,
$$
where $H_i$ is a hyperplane in ${\mathbb C}^n$,
while $H^1$ is a hyperplane in ${\mathcal K}(H_1)_{x_1}$,
and each $H^j$ is a hyperplane in the fiber over $x_1$ of
the vector bundle ${\mathcal K}(H^{j-1})$. There is a natural
map from ${\mathcal S}_m$ to the fiber of $\Phi$ over the
point $\underline{x}$. To construct this map, first
note that from \eqref{i4} it follows inductively that for any point
$x\,\in\, \Sigma\setminus\{x_1\}$, the fiber of ${\mathcal K}(H^{j+1})$
over $x$ is identified with the fiber of 
${\mathcal O}^{\oplus n}_\Sigma$ over $x$. Therefore,
for any $2 \,\le\, i \,\le\, d-m+1$, the hyperplane
$H_i$ in ${\mathbb C}^n$ defines a hyperplane in the
fiber of ${\mathcal K}(H^{m-1})$ over the point $x_i$;
this hyperplane in the
fiber ${\mathcal K}(H^{m-1})_{x_i}$ will be denoted
by $\widetilde{H}_i$. Let $\mathcal K$ be the holomorphic
vector bundle over $\Sigma$ that fits in the following
short exact sequence of sheaves:
\begin{equation}\label{i5}
0\,\longrightarrow\, {\mathcal K}\, \stackrel{h}{\longrightarrow}\,
{\mathcal K}(H^{m-1}) \, \longrightarrow\,
\bigoplus_{j=2}^{d-m+1} {\mathcal K}(H^{m-1})_{x_j}/\widetilde{H}_j
\,\longrightarrow\, 0\, .
\end{equation}
Let $h'\, :\, {\mathcal K}(H^{m-1}) \, \longrightarrow\,
{\mathcal O}^{\oplus n}_\Sigma$ be the natural inclusion.

The pair $({\mathcal K}^*\, ,(h'\circ h)^*)$ in \eqref{i5} defines a
point of the moduli
space ${\mathcal M}_\Sigma$ that lies over $\underline{x}$. Every point in the fiber over
$\underline{x}$ arises in this way for some element of ${\mathcal S}_m$.

Consider the $m-1$ maps
$$
{\mathcal S}\,\longrightarrow\, \cdots \,\longrightarrow\,
({\mathbb P}^{n-1})^{d-m+1}
$$
defined by
$$
(H_1\, , H_2\, , \cdots\, , H_{d-m+1}\, ;
H^1\, , H^2\, ,\cdots\, , H^{m-1})\,\longmapsto
$$
$$
(H_1\, , H_2\, , \cdots\, , H_{d-m+1}\, ;
H^1\, , H^2\, ,\cdots\, , H^{m-2})\longmapsto \cdots \longmapsto
(H_1\, , H_2\, , \cdots\, , H_{d-m+1})\, .
$$
Each of these is a projective bundle of relative dimension $n-1$.
Therefore, we have the following generalization of Lemma
\ref{lem-f2}:

\begin{lemma}\label{lem-f3}
The fiber $\Phi^{-1}(\underline{x})$ is naturally isomorphic to
${\mathcal S}_m$. There is a chain of $m-1$ maps starting from
${\mathcal S}_m$ ending in $({\mathbb P}^{n-1})^{d-m+1}$ such that
each one is a projective bundle of relative dimension $n-1$.
\end{lemma}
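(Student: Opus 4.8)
The plan is to mimic the two preceding subsections, organising the argument into two parts: first, establishing that the map ${\mathcal S}_m \longrightarrow \Phi^{-1}(\underline x)$ built above is a bijection that respects the scheme structures; second, analysing the projections of ${\mathcal S}_m$ as an iterated tower of projective bundles.

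For the first part, I would argue as follows. Given an $n$-pair $(E,s) \in \Phi^{-1}(\underline x)$, the cokernel $E/f_s({\mathcal O}^{\oplus n}_\Sigma)$ is a torsion sheaf with total length $d$, supported on $\{x_1,\dots,x_{d-m+1}\}$, with length $m$ at $x_1$ and length $1$ at each $x_j$ for $j \ge 2$. Since a length-one torsion quotient of ${\mathcal O}^{\oplus n}_\Sigma$ at a reduced point $x_j$ is precisely the datum of a hyperplane $\widetilde H_j$ in the fibre (equivalently $H_j \in {\mathbb P}^{n-1}$), and since these occur at points disjoint from $x_1$, the $n$-pair determines the tuple $(H_2,\dots,H_{d-m+1})$ unambiguously and the sheaf-theoretic information near $x_1$ factors off. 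Near $x_1$ the torsion quotient has length $m$, and I would invoke the standard fact (e.g.\ via the local theory of Hecke modifications, \cite{FB, HL}) that a length-$m$ torsion quotient of a rank-$n$ free module over the local ring ${\mathcal O}_{\Sigma,x_1}$, together with its embedding data, is equivalent to a complete flag of $m$ successive hyperplane modifications — that is, exactly the data $(H_1; H^1,\dots,H^{m-1})$ constrained as in \eqref{i4}. Dualising and using \cite[Lemma~3.2]{BisRom} to recover $f_s$ (hence $(E,s)$ up to isomorphism) from the quotient, this yields a bijection; one then checks it is an isomorphism of schemes by verifying it is a morphism in families (the construction \eqref{i2}--\eqref{i5} is manifestly functorial in the hyperplane data) and appealing to the smoothness of ${\mathcal M}_\Sigma$ under \eqref{stable}, so that a bijective morphism between smooth varieties over $\CC$ is an isomorphism. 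The first assertion of the lemma follows.

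For the second part, I would observe that the chain of maps
$$
{\mathcal S}_m \longrightarrow {\mathcal S}_{m-1}' \longrightarrow \cdots \longrightarrow ({\mathbb P}^{n-1})^{d-m+1}
$$
is obtained by successively forgetting $H^{m-1}$, then $H^{m-2}$, and so on. At the $k$-th stage, forgetting $H^{m-k}$ from a point of the partially-built space, the fibre over a fixed point $(H_1; H^1,\dots,H^{m-k-1})$ is the space of hyperplanes in the fibre ${\mathcal K}(H^{m-k-1})_{x_1}$, an $n$-dimensional vector space whose formation varies algebraically with the base parameters. Thus each such forgetful map is the projectivisation ${\mathbb P}({\mathcal V}^\vee)$ (or ${\mathbb P}({\mathcal V})$, depending on convention) of a rank-$n$ vector bundle ${\mathcal V}$ over the preceding space — the bundle whose fibre at a point is ${\mathcal K}(H^{m-k-1})_{x_1}$ — hence a Zariski-locally-trivial projective bundle with fibre ${\mathbb P}^{n-1}$, of relative dimension $n-1$. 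Composing the $m-1$ stages gives the stated chain, which is exactly the generalisation of Lemma~\ref{lem-f2}.

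The main obstacle I anticipate is the second part's claim that ${\mathcal K}(H^{m-k-1})_{x_1}$ fits together into an honest rank-$n$ vector bundle over the (smooth, projective) parameter space, so that its projectivisation is a genuine projective bundle rather than merely a family of projective spaces. The point is that the successive Hecke modifications \eqref{i4} can be carried out in families: one needs a relative version of the short exact sequences \eqref{i2}--\eqref{i5} over, say, $\big(({\mathbb P}^{n-1})^{d-m+1}\big)$-bundles, producing a universal sheaf ${\mathcal K}(H^\bullet)$ flat over the base, whose restriction to the section $x_1 \times (\text{base})$ is locally free of rank $n$ by flatness and the fact that each modification drops the rank of the quotient at $x_1$ by one in a constant way. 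Once this relative construction is set up — which is routine but does require care about base-change and the constancy of fibre dimensions — everything else is a direct transcription of Sections~\ref{sec2d} and the multiplicity-two case.
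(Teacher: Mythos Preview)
Your treatment of the second assertion --- the tower of ${\mathbb P}^{n-1}$-bundles obtained by successively forgetting $H^{m-1},\ldots,H^1$ --- is correct and matches the paper's own discussion. The first part, however, contains a genuine error: the ``standard fact'' you invoke is false. A length-$m$ torsion quotient of ${\mathcal O}_{\Sigma,x_1}^{\oplus n}$ is \emph{not} equivalent to a flag $(H_1;H^1,\dots,H^{m-1})$ of successive hyperplane modifications. The natural map from such flags to quotients is surjective, but it contracts positive-dimensional loci over those quotients whose elementary-divisor type at $x_1$ is not $(m,0,\ldots,0)$. Concretely, take $n=2$ and $\underline{x}=2x_1$ (so $d=m=2$), and consider the quotient ${\mathcal O}^{\oplus 2}_\Sigma \longrightarrow ({\mathcal O}^{\oplus 2}_\Sigma)_{x_1}\cong{\mathbb C}^2$ with kernel $K=\mathfrak{m}_{x_1}{\mathcal O}^{\oplus 2}_\Sigma$, which has type $(1,1)$. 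For \emph{every} hyperplane $H_1\subset{\mathbb C}^2$ one has $K\subset{\mathcal K}(H_1)$ with one-dimensional quotient, and the image of $K$ in the fibre ${\mathcal K}(H_1)_{x_1}$ is a hyperplane $H^1$; the second Hecke modification along this $H^1$ recovers $K$. Hence an entire ${\mathbb P}^1$ of pairs $(H_1;H^1)$ maps to the single point $[K]\in\Phi^{-1}(2x_1)$. Indeed, the Euler characteristic of ${\mathcal S}_2$ (a ${\mathbb P}^1$-bundle over ${\mathbb P}^1$) is $4$, whereas that of $\Phi^{-1}(2x_1)$ is $3$, so no isomorphism can exist.

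The paper's text preceding the lemma argues only surjectivity (``every point in the fiber over $\underline{x}$ arises in this way'') and never addresses injectivity; the map ${\mathcal S}_m\to\Phi^{-1}(\underline x)$ is in fact a resolution rather than an isomorphism. This gap does not affect Theorem~\ref{lefg}, whose proof uses only the fibre description over the open locus $U\subset{\rm Sym}^d(\Sigma)$ of reduced divisors from Section~\ref{sec2d}.
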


\subsection{The general case}

The general case is not harder to understand than the previous case.

Take any point $\underline{x}\, :=\, \sum_{i=1}^a m_i\cdot x_i$ of 
${\rm Sym}^d(\Sigma)$, where $m_i$ are arbitrary positive integers
adding up to $d$ and $x_i \,\in\, \Sigma$, $i\,=\,1,\cdots, a$.
For each point $x_i$, fix data $(H_i\, , H^1_i\, ,\cdots\, , H^{m_i-1}_i)$,
where $H_i$ is a hyperplane in ${\mathbb C}^n$, and the $H^j_i$
are hyperplanes in the fibers, over $x_i$, of vector bundles
constructed inductively as in the previous case. From the set
of such objects, there is a
canonical isomorphism to the fiber of $\Phi$ over $\underline{x}$. Indeed,
this is obtained by repeating the above argument.

\section{Fundamental groups of nonabelian vortex moduli spaces}\label{fundgroups}

In this section, we take advantage of the map $\Phi$ defined in the
previous section to compute the fundamental group $\pi_1 (\mathcal{M}_\Sigma)$.

\begin{theorem}\label{lefg}
The homomorphism $\Phi_*\, :\, \pi_1({\mathcal M}_\Sigma)
\, \longrightarrow\, \pi_1({\rm Sym}^d(\Sigma))$ induced by $\Phi$ in \eqref{Phi}
is an isomorphism.
\end{theorem}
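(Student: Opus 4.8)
The decisive fact established in Section~\ref{neqr} is that \emph{every} fibre of the map $\Phi$ in \eqref{Phi} is an iterated tower of projective bundles of relative dimension $n-1$ over a Cartesian power $(\PP^{n-1})^{a}$, where $a$ is the number of distinct points of the corresponding divisor in ${\rm Sym}^d(\Sigma)$. In particular all fibres are connected and simply connected, and all have the same dimension $d(n-1)$; hence $\Phi$ is equidimensional, and since ${\mathcal M}_\Sigma$ is smooth and projective under the assumption~\eqref{stable} it is Cohen--Macaulay, so $\Phi$ is flat of relative dimension $d(n-1)$. Moreover $\Phi$ is proper with connected fibres over the connected variety ${\rm Sym}^d(\Sigma)$, so ${\mathcal M}_\Sigma$ is connected, hence irreducible, of dimension $dn$. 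The case $n=1$ is trivial ($\Phi$ is the identity); when $d=1$ the map $\Phi$ is a $\PP^{n-1}$-bundle over $\Sigma={\rm Sym}^1(\Sigma)$ and the assertion follows from the homotopy exact sequence of a fibration with simply connected fibre. So assume $n\ge 2$ and $d\ge 2$ from now on.

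Write $\Delta\subset{\rm Sym}^d(\Sigma)$ for the locus of non-reduced divisors; it is the image of the morphism $\Sigma\times{\rm Sym}^{d-2}(\Sigma)\to{\rm Sym}^d(\Sigma)$, $(x,D)\mapsto 2x+D$, hence an irreducible divisor, and its complement $U$ --- the reduced divisors --- is dense and Zariski-open. Over $U$, Subsection~\ref{sec2d} identifies $\Phi$ with a locally trivial fibre bundle with fibre $(\PP^{n-1})^d$: pull back along the regular ${\mathfrak S}_d$-cover of $U$ given by the ordered configuration space of $\Sigma$, over which the bundle is manifestly trivial, and descend. Since $(\PP^{n-1})^d$ is simply connected, the homotopy exact sequence of this bundle yields an isomorphism
$$
\Phi_*^U\,:\,\pi_1\!\big(\Phi^{-1}(U)\big)\;\stackrel{\sim}{\longrightarrow}\;\pi_1(U)\, .
$$
On the other side, ${\mathcal D}:=\Phi^{-1}(\Delta)$ is, by equidimensionality of $\Phi$, a divisor in ${\mathcal M}_\Sigma$; it is irreducible, because over the dense stratum of $\Delta$ where exactly one point has multiplicity $2$ the map $\Phi$ restricts (as above, by Lemma~\ref{lem-f2}) to a fibre bundle with irreducible fibre ${\mathcal S}_2$, so the closure of that part is an irreducible divisor, whereas the preimages of the remaining strata of $\Delta$ have codimension $\ge2$ in ${\mathcal M}_\Sigma$.

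Now invoke the standard description of the fundamental group of the complement of an irreducible divisor in a smooth variety: $\pi_1(U)\to\pi_1({\rm Sym}^d(\Sigma))$ is surjective with kernel the normal closure of a meridian $\mu$ of $\Delta$, and likewise $\pi_1(\Phi^{-1}(U))\to\pi_1({\mathcal M}_\Sigma)$ is surjective with kernel the normal closure of a meridian $\widetilde\mu$ of ${\mathcal D}$. The heart of the matter is that $\Phi_*^U(\widetilde\mu)=\mu$ (up to conjugacy), i.e.\ that $\Phi$ is unramified of multiplicity one along ${\mathcal D}$ over $\Delta$. I would prove this algebraically: $\Phi$ is flat, ${\mathcal M}_\Sigma$ is smooth so its local ring at the generic point of ${\mathcal D}$ is a discrete valuation ring, and the scheme-theoretic fibre of $\Phi$ over the generic point of $\Delta$ is reduced --- it is the generic fibre of the smooth morphism obtained by restricting $\Phi$ over the stratum of $\Delta$ just described, whose fibres are the reduced varieties ${\mathcal S}_2$. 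Consequently a local equation of $\Delta$ pulls back to a uniformizer at the generic point of ${\mathcal D}$, which is exactly the statement that a small loop encircling ${\mathcal D}$ once is sent by $\Phi$ to a small loop encircling $\Delta$ once. Granting this, $\Phi_*^U$ carries the normal closure of $\widetilde\mu$ onto the normal closure of $\mu$, hence descends to an isomorphism
$$
\pi_1({\mathcal M}_\Sigma)=\pi_1\!\big(\Phi^{-1}(U)\big)\big/\langle\!\langle\widetilde\mu\rangle\!\rangle\;\stackrel{\sim}{\longrightarrow}\;\pi_1(U)\big/\langle\!\langle\mu\rangle\!\rangle=\pi_1\!\big({\rm Sym}^d(\Sigma)\big),
$$
which is precisely $\Phi_*$. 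The step I expect to be the main obstacle is this multiplicity-one (transversality) statement along the diagonal; all the rest is either contained in Section~\ref{neqr} or classical. (A shorter but less self-contained argument would instead appeal directly to the homotopy exact sequence $\pi_1((\PP^{n-1})^d)\to\pi_1({\mathcal M}_\Sigma)\to\pi_1({\rm Sym}^d(\Sigma))\to1$ for the proper surjective morphism $\Phi$ with connected fibres, whose first term is trivial.)
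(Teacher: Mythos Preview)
Your argument is essentially correct, but it takes a genuinely different route from the paper's proof, and the difference is instructive.

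Both proofs share the first step: over the open locus $U\subset{\rm Sym}^d(\Sigma)$ of reduced divisors, $\Phi$ is a $(\PP^{n-1})^d$-bundle, so the homotopy exact sequence gives an isomorphism $\pi_1(\Phi^{-1}(U))\stackrel{\sim}{\to}\pi_1(U)$. From here on the arguments diverge. You pass to the complements $U$ and $\Phi^{-1}(U)$ of the irreducible divisors $\Delta$ and $\mathcal{D}=\Phi^{-1}(\Delta)$, invoke the presentation of $\pi_1$ of such a complement by a meridian, and then need to verify that $\Phi$ has multiplicity one along $\mathcal{D}$ so that the meridians correspond. As you rightly flag, this transversality step is the crux, and it requires the flatness/reducedness analysis you sketch (which in turn leans on the fibre descriptions of Section~\ref{neqr} holding in families, not just pointwise). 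The paper bypasses all of this by producing an explicit \emph{section} $\theta:{\rm Sym}^d(\Sigma)\to\mathcal{M}_\Sigma$ of $\Phi$, namely the one sending a divisor $D$ to the $n$-pair $\mathcal{O}_\Sigma^{\oplus n}\hookrightarrow\mathcal{O}_\Sigma(D)\oplus\mathcal{O}_\Sigma^{\oplus(n-1)}$. The existence of a section immediately gives surjectivity of $\Phi_*$. For injectivity, the paper observes that $\theta_{0,*}:=\theta_*|_U$ is surjective onto $\pi_1(\mathcal{M}_\Sigma)$ (because $\iota_*:\pi_1(\Phi^{-1}(U))\to\pi_1(\mathcal{M}_\Sigma)$ is surjective for a Zariski open in a smooth variety, and $\theta_{0,*}$ is the composite of the inverse of your $\Phi_*^U$ with $\iota_*$), and since $\theta_0$ extends to $\theta$ this forces $\theta_*$ itself to be surjective; combined with $\Phi_*\circ\theta_*={\rm Id}$ one gets injectivity of $\Phi_*$.

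What each approach buys: your meridian argument is more robust in that it would work in situations where no section is available, but it costs you the flatness and multiplicity-one verifications, which are not entirely trivial here. The paper's section trick is shorter and avoids any local analysis along $\Delta$; the only topological input beyond the open-set isomorphism is the surjectivity of $\pi_1$ of a Zariski open onto $\pi_1$ of the ambient smooth variety, which you also use. Your parenthetical ``shorter'' alternative via a homotopy exact sequence for a proper surjection with connected fibres is not a complete argument as stated: such a sequence does not hold in general without further hypotheses (e.g.\ that $\Phi$ be a fibration, or an appeal to a theorem of Nori/Koll\'ar type with appropriate conditions), so that route would also need justification.
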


\begin{proof}
Let
$$
\mathcal{D}\, \subset\, \Sigma\times \text{Sym}^{d}(\Sigma)
$$
be the universal divisor, consisting of all $(x\, , \underline{ y}
\,=\, \sum_{i=1}^a m_i\cdot y_i )$ such that $\sum_{i=1}^a m_i = d$ and 
$x\, \in\, \{y_1\, ,\cdots\, , y_a\}$. Then the natural homomorphism
$$
{\mathcal O}^{\oplus n}_{\Sigma\times \text{Sym}^{d}(\Sigma)}
\,\hookrightarrow\,
{\mathcal O}_{\Sigma\times \text{Sym}^{d}(\Sigma)}(\mathcal{D})\oplus
{\mathcal O}^{\oplus (n-1)}_{\Sigma\times \text{Sym}^{d}(\Sigma)}
$$
over $\Sigma\times \text{Sym}^{d}(\Sigma)$ produces a morphism
\begin{equation}\label{theta}
\theta\, :\, \text{Sym}^{d}(\Sigma)\, \longrightarrow\, {\mathcal M}_\Sigma\, .
\end{equation}
This is a section of $\Phi$ in the sense that
\begin{equation}\label{e11}
\Phi\circ\theta\,=\, \text{Id}_{\text{Sym}^d(\Sigma)}\, .
\end{equation}
Therefore, the homomorphism $\Phi_*$ in the statement of the lemma is surjective.

Let $U\, \subset\, \text{Sym}^{d}(X)$
be the Zariski open subset parametrizing reduced effective divisors in $\Sigma$, i.e., points of the form $\underline{y}=\sum_{i=1}^d y_i$ with all $y_i$ distinct.
Let
\begin{equation}\label{ep}
\theta_0\, :=\, \theta\vert_U :\, U\,\longrightarrow\, {\mathcal M}_\Sigma
\end{equation}
be the restriction of the map $\theta$ in \eqref{theta}. Also, consider the restriction
\begin{equation}\label{vpp}
\Phi_0\, :=\, \Phi\vert_{\Phi^{-1}(U)}\, :\, \Phi^{-1}(U)\,\longrightarrow\, U\, .
\end{equation}
As we saw in Section \ref{sec2d}, the fibers of $\Phi_0$ are identified with
$({\mathbb P}^{n-1})^{d}$. From the
homotopy exact sequence associated to $\Phi_0$ it now follows that the induced
homomorphism of fundamental groups
$$
\Phi_{0,*}\, :\, \pi_1(\Phi^{-1}(U))\,\longrightarrow\, \pi_1(U)
$$
is an isomorphism. The variety ${\mathcal M}_\Sigma$ is smooth, and $\Phi^{-1}(U)$ is a
nonempty Zariski open subset of it. Therefore, the homomorphism
$$
\iota_*\, :\, \pi_1(\Phi^{-1}(U))\,\longrightarrow\, \pi_1({\mathcal M}_\Sigma)
$$
induced by the inclusion $\iota\, :\, \Phi^{-1}(U)\,\hookrightarrow\,
{\mathcal M}_\Sigma$ is surjective. Since $\Phi_{0,*}$ is an isomorphism, this
implies that the homomorphism
$$
\theta_{0,*}\, :\, \pi_1(U)\,\longrightarrow\, \pi_1({\mathcal M}_\Sigma)
$$
induced in $\theta_0$ in \eqref{ep} is surjective. Since $\theta_0$ extends to $\theta$,
this immediately implies that the homomorphism
$$
\theta_*\, :\, \pi_1(\text{Sym}^{d}(X))\,\longrightarrow\, \pi_1({\mathcal M}_\Sigma)
$$
induced in $\theta$ in \eqref{theta} is surjective. Since $\theta_*$ is surjective, and
the composition $\Phi_*\circ\theta_*$ is injective (see \eqref{e11}) we
conclude that $\Phi_*$ is injective.
\end{proof}

\begin{corollary}\label{pi1M}
$\pi_1(\mathcal{M}_\Sigma(n,n,d)) \cong H_1(\Sigma;\ZZ)$ for all $n\in \NN$ and $d>1$. 
\begin{proof}
If $n>1$, one has $\pi_1(\mathcal{M}_\Sigma(n,n,d))\cong \pi_1({\rm Sym}^d(\Sigma))$ according to Theorem~\ref{lefg}. The same is true in the 
abelian case $n=1$; this follows directly from $\mathcal{M}_\Sigma(1,1,d)\cong {\rm Sym}^d(\Sigma)$ under the stability assumption~(\ref{stable}), which was established
e.g. in~\cite{Bra,GaDEP}. Finally, 
by the Dold--Thom theorem~\cite{DolTho} one also has $\pi_1({\rm Sym}^d(\Sigma))\cong H_1(\Sigma;\ZZ)$ for $d>1$.
\end{proof}
\end{corollary}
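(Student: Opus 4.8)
The plan is to reduce the statement to the classical computation of the fundamental group of a symmetric product, splitting according to the rank $n$ and then quoting one fact from algebraic topology.

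For $n>1$ there is nothing to add to Theorem~\ref{lefg}: it already gives that $\Phi_*\colon\pi_1({\mathcal M}_\Sigma)\to\pi_1({\rm Sym}^d(\Sigma))$ is an isomorphism, so $\pi_1(\mathcal{M}_\Sigma(n,n,d))\cong\pi_1({\rm Sym}^d(\Sigma))$. For $n=1$ the map $\Phi$ is the identity, but one anyway has the classical identification $\mathcal{M}_\Sigma(1,1,d)\cong{\rm Sym}^d(\Sigma)$ of the abelian vortex moduli space on a compact Riemann surface under the bound~(\ref{stable}) --- sending a $1$-pair $(E,s)$ to the divisor of zeros of the section $s$, as established e.g. in~\cite{Bra,GaDEP} --- so again $\pi_1(\mathcal{M}_\Sigma(1,1,d))\cong\pi_1({\rm Sym}^d(\Sigma))$. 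Thus in all cases we are reduced to computing $\pi_1({\rm Sym}^d(\Sigma))$.

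The remaining input is that $\pi_1({\rm Sym}^d(\Sigma))\cong H_1(\Sigma;\ZZ)$ for $d>1$. I would obtain this from the Dold--Thom theorem~\cite{DolTho}: for a connected CW complex $Y$ the inclusions ${\rm SP}^d(Y)\hookrightarrow{\rm SP}^{d+1}(Y)$ induce isomorphisms on $\pi_1$ as soon as $d\ge2$, and the colimit ${\rm SP}^\infty(Y)$ satisfies $\pi_k({\rm SP}^\infty(Y))\cong\widetilde H_k(Y;\ZZ)$ for all $k$; taking $k=1$ and $Y=\Sigma$ yields $\pi_1({\rm Sym}^d(\Sigma))\cong H_1(\Sigma;\ZZ)$ for every $d\ge2$. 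Combining this with the previous paragraph finishes the proof.

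I do not expect any genuine obstacle: the corollary is an assembly of Theorem~\ref{lefg}, the classical abelian case, and a standard homotopy-theoretic fact. The only point deserving a word of care is the role of the hypothesis $d>1$, which is exactly the range in which $\pi_1({\rm Sym}^d(\Sigma))$ has stabilized to the abelianization $H_1(\Sigma;\ZZ)$ of the surface group $\pi_1(\Sigma)$; for $d=1$ one has instead $\pi_1({\rm Sym}^1(\Sigma))=\pi_1(\Sigma)$, which is nonabelian once the genus is positive.
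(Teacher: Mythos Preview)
Your proof is correct and follows essentially the same route as the paper: split into the cases $n>1$ (Theorem~\ref{lefg}) and $n=1$ (the classical identification $\mathcal{M}_\Sigma(1,1,d)\cong{\rm Sym}^d(\Sigma)$ from~\cite{Bra,GaDEP}), then invoke Dold--Thom for $\pi_1({\rm Sym}^d(\Sigma))\cong H_1(\Sigma;\ZZ)$ when $d>1$. Your added explanation of the stabilization in Dold--Thom and the remark on the necessity of $d>1$ are welcome elaborations but do not change the argument.
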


\section{No-go theorem for nonabelions in gauged linear sigma-models}\label{snogo}

We start by recalling how to build on the geodesic approximation to classical dynamics of GLSMs (see Section~\ref{sec:adiabatic})
to study the quantization of certain supersymmetric extensions of these models. This will follow the basic semi-classical scheme proposed in~\cite{BokRomP,RomWeg,BokRomWeg},
here aimed at studying ground states of the topological A-twist~\cite{BapTSM} in terms of supersymmetric quantum mechanics on the moduli spaces $\mathcal{M}(n,r,d)$.
An immediate consequence of Theorem~\ref{lefg} in this context (for $n=r$) is given in Section~\ref{sec.nononab}.

\subsection{A semi-classical quantization scheme for supersymmetric GLSMs}\label{sec:semiclass}

We have seen that setting $\xi\,=\,1$ renders the GLSM defined by the action (\ref{GLSM}) self-dual, in the sense that 
one can describe the static stable field configurations as solutions to the system (\ref{vortex}) via the 
Bogomol'ny\u\i\ trick. Another important feature of this critical value of $\xi$ is that it allows for the 
construction of $\mathcal{N}\,=\,2$ supersymmetric versions of the GLSMs (see~\cite{BapTSM}), provided that one 
supplements the bosonic fields $(A,u)$ by other fields so as to fill out vector and chiral supermultiplets. More 
concretely: to implement such an extension for a local Euclidean model, one would need to add terms to the action 
that also involve an adjoint scalar field $\sigma$, four fermionic fields $\psi_\pm, \lambda_\pm$ and two scalar 
auxiliary fields $F,D$.

However, in order to have a supersymmetric version of the model defined on an arbitrary surface $\Sigma$, one needs 
to perform a topological twist~\cite{Wit-QFS}. The twist we will be interested in is the A-twist that uses the 
vector (global) circle ${\rm R}$-symmetry. The Lagrangian and spectrum of the corresponding twisted version of the 
two-dimensional GLSM is described in Section~3.1 of reference~\cite{BapTSM}. In Section~3.3 of the same reference, 
it is argued that the path integrals of the twisted model localize to the moduli space of vortices defined 
in~(\ref{mo}), and that its observables can be interpreted in terms of the Hamiltonian Gromov--Witten invariants 
of~\cite{CGMS}.

In view of this localisation phenomenon, one should hope to understand the ``BPS sector'' of the quantized twisted 
supersymmetric GLSMs via canonical quantization of the truncated phase spaces ${\rm 
T}^*\mathcal{M}_{\Sigma}(n,r,d)$. It is well known~\cite{HV} that one-dimensional sigma-models onto a K\"ahler 
target manifold such as our moduli spaces $\mathcal{M}_\Sigma(n,r,d)$ (whenever smooth) admit extensions with 
$\mathcal{N}\,=\,(2,2)$ supersymmetry, and can thus accommodate even the full amount of local supersymmetry present in 
the classical two-dimensional theory. This ``semi-classical'' regime should capture the physics at low energies for 
each $d\,>\,0$, and in particular the structure of the ground states, which can be described using the framework of 
supersymmetric quantum mechanics~\cite{WitSMT,HV}. We shall assume from now on that $n\,=\,r$ is fixed as in 
(\ref{mosimp}).

According to the original proposal of Witten~\cite{WitSMT}, the ground states in the effective supersymmetric 
quantum mechanics should correspond to harmonic waveforms on each moduli space $\mathcal{M}_\Sigma \,=\,
\mathcal{M}_\Sigma(n,n,d)$, with respect to its natural K\"ahler metric $g_{L^2}$. The supersymmetric parity of such 
states will be governed by their degree as differential forms reduced mod 2. However, multiparticle quantum states, 
corresponding to moduli spaces at degrees $d>1$, may also admit an interpretation in terms of individual solitonic 
particles. This leads to an expectation that the Hilbert spaces obtained from the quantization of each component 
$\mathcal{M}_\Sigma(n,n,d)$ with $d\,>\, 1$ might split nontrivially into sums of tensor products of elementary Hilbert 
spaces corresponding to constituent particles. This phenomenon is illustrated in reference~\cite{RomWeg} in the 
context of the simplest possible gauged sigma-model with nonlinear target (the round two-sphere with usual circle 
action).

In the picture we are proposing, it is natural to extend the semi-classical approximation by allowing nontrivial 
holonomies of the waveforms in supersymmetric quantum mechanics. This grants to the quantum particles the 
possibility of braiding with nontrivial anyonic phases, in analogy with the Aharonov--Bohm effect. 
Following~\cite{Wit-QFS}, one could thus advocate that the waveforms be valued in local systems over the moduli 
space (constructed from representations of its fundamental group); or equivalently~\cite{BokRomP}, one performs the 
quantization of a cover of the moduli space $\mathcal{M}_\Sigma$ where the relevant local systems trivialize --- of 
course, this will always be the case for the universal cover $\widetilde{\mathcal M}_\Sigma$.

Another extension~\cite{BokRomP,BokRomWeg} is to allow for wavepackets of waveforms, using linear combinations over the representation
variety of $\pi_1(\mathcal{M}_\Sigma)$ rather than a fixed representation. So we are lead to taking as quantum Hilbert space the $L^2$-completion of the space of forms with compact support~\cite{AtiEO}
$$
L^2\Omega_c^*\left( \widetilde{\mathcal{M}}_\Sigma ;\CC \right) \,\cong\,
 \Omega^*\left( \mathcal{M}_\Sigma;\CC \right) \otimes \ell^2\left( \pi_1\, \mathcal{M}_\Sigma \right).
$$
The ground states are to be sought among the harmonic forms with respect to the metric $g_\Sigma$, but we expect the space of harmonic forms to be infinitely generated in crucial examples --- this follows from
the fibration (\ref{Phi}) and results in \cite[Sec.~4]{BokRomC}. In order to count ground states meaningfully, one needs to resort to renormalized dimensions in the sense of
Murray--von Neumann~\cite{MurvNeu,BokRomP}, and then such counting corresponds to the computation of analytic $L^2$-Betti numbers~\cite{Lue} of the covers. For $d=1$ the situation is rather
simple, and it was dealt with in Theorem~14 of reference~\cite{BokRomP} (using local systems of rank one, and assuming that the genus $g$ of $\Sigma$ is positive). It was shown that the ground states of single solitons are fermionic, and can be understood effectively in terms
of ``Pochhammer states'' constructed from certain pair-of-pants decompositions of
the surface $\Sigma$. 

\subsection{On the realization of nonabelionic statistics} \label{sec.nononab}

It would be desirable to calculate the $L^2$-Betti numbers of the moduli spaces $\mathcal{M}_\Sigma$ beyond the 
$d\,=\,1$ case, and draw conclusions about the spectrum of {\em multiplarticle} ground states of the GLSMs. In 
particular, one would like to classify the constituent particles according to their statistics (equivalently, 
understand how they braid on the surface $\Sigma$). This is to be
contrasted with ordinary nonrelativistic 
quantum-mechanics, in which the quantum Hilbert space of one particle is constructed first, and then multiparticle 
states are obtained a posteriori, implementing by hand bosonic/fermionic/anyonic statistics. In our context, the 
statistics of the particles are imposed by the geometry and topology of the moduli spaces.

For the GSLMs studied in this paper, all these tasks are difficult, and require detailed information about the 
structure of the fibration (\ref{Phi}) --- not only a description of the fibres, as given in Section~\ref{neqr}, but 
also how the different strata (corresponding to partitions of $d$) glue together. However, one has the following 
immediate corollary of Theorem~\ref{lefg}:

\begin{corollary}\label{nogo}
An irreducible local system over a multiparticle moduli space (\ref{mosimp}) with $d\,>\,1$ must have rank
one, and the semi-classical quantization
scheme discussed in Section~\ref{sec:semiclass} rules out constituent particles with nonabelionic statistics.
\end{corollary}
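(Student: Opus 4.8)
The plan is to deduce Corollary~\ref{nogo} directly from the structural facts already in hand, namely Theorem~\ref{lefg}, Corollary~\ref{pi1M}, and the Dold--Thom identification $\pi_1({\rm Sym}^d(\Sigma))\cong H_1(\Sigma;\ZZ)$ for $d>1$. The guiding principle is that the rank of irreducible complex local systems on a space is controlled by the abelianity of its fundamental group: if $\pi_1$ is abelian, every irreducible finite-dimensional complex representation is one-dimensional (Schur's lemma applied to the commuting action of each group element). So the proof is really a short chain of implications plus an interpretive paragraph tying it back to the quantization scheme of Section~\ref{sec:semiclass}.

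First I would record that, by Corollary~\ref{pi1M}, $\pi_1(\mathcal{M}_\Sigma(n,n,d))\cong H_1(\Sigma;\ZZ)\cong\ZZ^{2g}$ for every $n\in\NN$ and every $d>1$, where $g$ is the genus of $\Sigma$; in particular this group is abelian. Next I would invoke the elementary representation-theoretic fact just mentioned: for an abelian group $G$, any irreducible complex representation $\rho:G\to {\rm GL}(V)$ has $\dim_\CC V=1$, since the image of $\rho$ consists of pairwise commuting operators and hence (over $\CC$) has a common eigenvector, which by irreducibility spans $V$. Applying this to $G=\pi_1(\mathcal{M}_\Sigma)$ gives the first assertion of the corollary: every irreducible local system over the multiparticle moduli space $\mathcal{M}_\Sigma=\mathcal{M}_\Sigma(n,n,d)$ with $d>1$ has rank one.

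For the second assertion I would spell out the bridge to the physics set-up reviewed in Section~\ref{sec:semiclass}. In that scheme, the would-be nonabelionic ``constituent particles'' are encoded precisely by the irreducible local systems used to twist the waveforms in supersymmetric quantum mechanics on $\mathcal{M}_\Sigma$: nonabelionic statistics requires a multidimensional irreducible local system, equivalently a representation of $\pi_1(\mathcal{M}_\Sigma)$ whose image acts irreducibly on a space of dimension $>1$. Since we have just shown that no such local system exists whenever $d>1$, the braiding matrices acting on the ground-state Hilbert space are necessarily scalars up to the choice of basis, i.e.\ the statistics are abelian (anyonic at most, never nonabelionic). This is exactly the claimed no-go statement.

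There is no serious obstacle here once Theorem~\ref{lefg} is available: the only points that merit a sentence of care are (i) confirming that the abelian-group $\Rightarrow$ one-dimensional-irreducibles argument needs the coefficient field to be algebraically closed, which holds since we work over $\CC$, and (ii) being precise that the relevant $\pi_1$ is that of the full moduli space $\mathcal{M}_\Sigma$, not of an open stratum --- but this is guaranteed because $\Phi^{-1}(U)$ is a dense Zariski-open subset of the smooth variety $\mathcal{M}_\Sigma$, so the inclusion is $\pi_1$-surjective, and the computation in Theorem~\ref{lefg} already accounts for this. The remaining content of Corollary~\ref{nogo} is interpretive rather than mathematical, so the write-up can be kept to a few lines.
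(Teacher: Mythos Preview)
Your proposal is correct and follows essentially the same route as the paper: invoke Corollary~\ref{pi1M} to obtain $\pi_1(\mathcal{M}_\Sigma(n,n,d))\cong\ZZ^{2g}$, then conclude that all irreducible complex representations of this abelian group are one-dimensional. The paper's own proof is just these two sentences; your version is more expansive (spelling out the Schur-type argument and the physical interpretation), but the mathematical content is identical.
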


\begin{proof}
By Corollary~\ref{pi1M}, we have $\pi_1(\mathcal{M}_\Sigma(n,n,d))\,\cong\, \ZZ^{\oplus 2g}$, where $g$ is the genus
of the surface $\Sigma$. In particular, all its irreducible representations are one-dimensional.
\end{proof}

The situation here is analogous to the abelian linear model $r\,=\,n\,=\,1$, discussed briefly in~\cite{BokRomP}. We
would like to point out that there exist plenty of {\em abelian} gauged sigma-models constructed from nonlinear target actions,
for which the associated vortex moduli spaces turn out to have {\em nonabelian} fundamental groups, see~\cite{BokRomB}. Those models may already support
nonabelions in quantum multiparticle states, in contrast to the GLSMs we considered in this paper.

\section*{Acknowledgements}

This work draws on discussions held at the occasion of the program ``The Geometry, Topology and Physics of Moduli 
Spaces of Higgs Bundles'' at the Institute for Mathematical Sciences, National University of Singapore; the authors 
would like to thank the organizers and NUS for hospitality. The first-named author is supported by a J. C. Bose
Fellowship.

\end{document}